\theoremstyle{plain}
\newtheorem{thm}{\protect\theoremname}
  \theoremstyle{definition}
  \newtheorem{defn}[thm]{\protect\definitionname}
  \theoremstyle{remark}
  \newtheorem*{rem*}{\protect\remarkname}
  \theoremstyle{plain}
\newtheorem*{lem*}{\protect\lemmaname}
  \theoremstyle{plain}
\newtheorem*{thm*}{Main Theorem}
\newcounter{lemma}
  \providecommand{\definitionname}{Definition}
  \providecommand{\lemmaname}{Lemma}
  \providecommand{\remarkname}{Remark}
  \providecommand{\theoremname}{Theorem}
\providecommand{\theoremname}{Theorem}
\def\nomath#1{\ifmmode\text{#1}\else#1\fi}
\def\interval#1#2{\left\langle#1,#2\right\rangle}
\def\eqspaces{\;=\;}
\def\lespaces{\;\le\;}
\def\Prob#1{\Pr\!\left[#1\right]}
\def\ProbParam#1#2{\Pr_{#1}\!\left[#2\right]}
\def\OO{\mathcal O}
\def\natural{\mathbb{N}}
\def\emphDef#1{{\bf#1}}
\def\emphDefMath#1{\mathbf{#1}}
\title{Predecessor problem on smooth distributions}
\affiliation{
  Faculty of Mathematics and Physics, Charles University in Prague, Czech Republic \\
  CZ.NIC, z.\,s.\,p.\,o.
}
\author{Vladim\'ir \v{C}un\'at\affiliationmark{1,2}%
  \thanks{This research was supported by the Czech Science Foundation under grant GA14-10799S.}
}
\begin{document}

\maketitle

\begin{abstract}
We follow a research thread studying the predecessor problem on input
taken from ``smooth'' distributions family. We propose a conceptually
simpler solution, utilizing well-known results from much better studied
variant of the problem that assumes nothing about the input. As a
side effect, we are able to extend the range of input distributions
handled in $\mathcal{O}(\log\log n)$ time, which are the most
studied cases, and we provide better insight into why the related
methods are faster on smooth inputs.
\end{abstract}

\keywords{data structures, computational complexity, search tree}


\section{Introduction}

The predecessor problem involves maintaining a set of linearly
ordered keys and performing queries on that set. The basic variant allows insertions, deletions, and -- more complex to perform -- \emph{predecessor} query: find the largest contained key that is less than a given value. 

This paper investigates the predecessor problem on inputs described by  a~well studied class of ``smooth'' probability distributions.
In the following section we formalize models that we use for input and computation,
and we introduce the class of smooth distributions. In~Section~\ref{sec:smooth-bucket} we
prove a key property of smooth distributions that seems to be
implicitly used in all related work (in weaker forms) as the most
important step to achieve the stated performance. In~Section~\ref{sec:main-results}
we show how to utilize this property directly with well-known distribution-independent
structures for the predecessor problem to get better performance in
a simpler way, essentially by converting to the case of polynomial-sized
universe.

\section{Definitions \label{sec:defs}}

We assume the usual word-RAM~\cite{HagerupT98}
as our computational model, and additionally we restrict keys to word-sized
integers. That can be shown not to be a~significant limitation, as
many other key types can be converted to the integer case cheaply,
for example standard floating-point numbers \cite[sec. 2.1.3]{Goldberg91}
and strings \cite{AndersT01}.

We study the behavior of the predecessor problem in cases where the input
has certain specific properties, which requires us to specify how we model the input:
\begin{defn}[input model]
Insertions take keys distributed according to a~density~$\mu$ that
does not change during the whole life of the structure. Deletions
remove uniformly from the set of \emph{contained} elements. All the
operations are arbitrarily intermixed and mutually independent.
\end{defn}
This model is convenient because it preserves distribution of the
stored set~– at any point of execution the set appears like a sample
chosen anew according to $\mu$. That is usually essential for simplifying complexity
analyses in related structures. Various random deletion models were
thoroughly studied by Knuth~\cite{Knuth77}.

Most of related papers work with key distributions that can be described
as \emph{$\left(s^{\alpha},s^{1-\delta}\right)$-smooth} for some
constants $\alpha,\delta>0$. Typically it is assumed that the particular
density of the distribution is not known, but $\alpha$ and $\delta$
are fixed known parameters.
\begin{rem*}
If the distribution \emph{was} known and its inverse cumulative distribution
function $F^{-1}$ was cheaply computable (or approximable), we could
convert the problem to the uniform case (or another bounded density).
We could simply store keys transformed by $F^{-1}$, which would preserve
the order. With any bounded input distribution, all operations can
be easily handled in constant time by simply splitting the the key
domain into $\Theta(n)$ intervals and mapping each to an element
of an array.\footnote{We use $n$ to denote the current size of the stored set, following
the custom in data structures.} This solution for bounded distributions was pointed out already by
Andersson and Mattsson \cite[sec. 5.2]{AndersM93}.
\end{rem*}
The concept of a smooth probability density was originally introduced by Mehlhorn, Tsakalidis \cite{MehlT93} and later generalized by Andersson, Mattsson \cite{AndersM93}. We only amend the definition given in \cite{AndersM93,KMSTTZ06} by allowing $c_2 = c_3$.%
\begin{defn}
 \label{def:smooth}
	Let $\mu$ be the density of a continuous random variable $X$ defined over an interval~$\interval a b$. Given two functions $f_1$ and $f_2$, $\mu$ is called \emphDef{$\emphDefMath{(f_1,f_2)}$-smooth} if
	\begin{multline*}
		\exists\beta \; \forall c_1,c_2,c_3 \quad a \le c_1 < c_2 \le c_3 \le b \quad \forall s \in \natural \\
		\ProbParam{X \sim \mu}{  X \in \interval{ c_2 - \frac{c_3-c_1}{f_1(s)} }{c_2} \; \middle| \; X \in \interval{c_1}{c_3}  }
		\lespaces \frac{\beta f_2(s)}{s}.
	\end{multline*}
\end{defn}

There is quite a simple intuition behind this complicated condition. It implies that if we cut some interval $\interval{c_1}{c_3}$ into $f_1(s)$ subintervals and generate $s$ values according to density $\mu$ restricted to $\interval{c_1}{c_3}$, every subinterval is expected to get $\OO(f_2(s))$ elements.

\begin{rem*}
This definition, as written, makes only sense for \emph{continuous} probability
distributions, yet any realistic models of computation can
\emph{not} assume handling arbitrary non-discrete values in constant
space and time per operation. Therefore we assume implicit rounding
when inserting a key, thus converting to the nearest value representable
in a single RAM word. It seems reasonable that the related research
assumed something similar without stating it explicitly.
\end{rem*}


\section{Static analysis of bucketing}
 \label{sec:smooth-bucket}
In this section we show how smoothness implies that an arbitrary sufficiently
short interval is expected to get only a~constant number of input keys.
\begin{lem*}
\refstepcounter{lemma}
\label{lem:smooth-bucket}
Given $\alpha,\delta>0$ and a positive
integer $n$, let us independently draw $n$ keys from a $\left(s^{\alpha},s^{1-\delta}\right)$-smooth
distribution, and split its whole domain into at least $n^{\alpha/\delta}$
equally long intervals. Then the expected number of keys in an~interval
is $\OO(1)$.
\end{lem*}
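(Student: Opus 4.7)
The plan is to feed the smoothness inequality the trivial choice $c_1 = a$, $c_3 = b$: conditioning on $X \in \interval{a}{b}$ then conditions on a sure event, and the definition's inequality becomes an unconditional probability estimate. Fix an arbitrary bucket $B = \interval{u}{u+L}$ of length $L = (b-a)/m$, where $m \ge n^{\alpha/\delta}$ is the number of buckets, and set $c_2 = u+L$. Let $s$ be the largest positive integer with $s^\alpha \le m$, so that $(b-a)/s^\alpha \ge L$ and the interval $\interval{c_2 - (b-a)/s^\alpha}{c_2}$ appearing in Definition~\ref{def:smooth} \emph{contains} the bucket $B$. Applied with $f_1(s) = s^\alpha$ and $f_2(s) = s^{1-\delta}$, the smoothness inequality then gives
$$\Prob{X \in B} \;\le\; \Prob{X \in \interval{c_2 - (b-a)/s^\alpha}{c_2}} \;\le\; \beta\, s^{-\delta}.$$

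Since $s = \lfloor m^{1/\alpha}\rfloor \ge m^{1/\alpha}/2$ whenever $m \ge 2^\alpha$, and $m \ge n^{\alpha/\delta}$ forces $m^{1/\alpha} \ge n^{1/\delta}$, this yields $s^{-\delta} \le 2^\delta / n$. Each of the $n$ independently drawn keys therefore lands in $B$ with probability $\OO(1/n)$, and by linearity of expectation the expected number of keys falling in $B$ is $\OO(1)$, as claimed.

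The only real obstacle is the integer rounding when picking $s$: the bound $s \ge m^{1/\alpha}/2$ breaks down once $m < 2^\alpha$, but then $n^{\alpha/\delta} \le m < 2^\alpha$ forces $n$ to be bounded by a constant depending only on $\alpha$ and $\delta$, so the conclusion is trivial in that corner case. A second tiny technicality is that the interval $\interval{c_2 - (b-a)/s^\alpha}{c_2}$ may extend below $a$ when $B$ lies near the left end of the domain; mass outside the support contributes zero, so the smoothness estimate still upper-bounds $\Prob{X \in B}$. Neither point warrants more than a sentence in the full write-up; the heart of the argument is simply recognising that the threshold $m \ge n^{\alpha/\delta}$ is chosen precisely so that the per-bucket probability $\beta\, s^{-\delta}$ supplied by smoothness matches the $\Theta(1/n)$ needed for $\OO(1)$ expected occupancy.
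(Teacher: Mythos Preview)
Your proof is correct and follows essentially the same approach as the paper: set $c_1=a$, $c_3=b$ to drop the conditioning, choose $c_2$ as the right endpoint of the bucket, and pick $s$ so that $s^\alpha\le m$ (hence the smoothness interval covers the bucket) while $s^{-\delta}=\OO(1/n)$. The only cosmetic difference is that the paper takes $s=\lfloor n^{1/\delta}\rfloor$ directly, whereas you take $s=\lfloor m^{1/\alpha}\rfloor$ and then use $m\ge n^{\alpha/\delta}$; both choices yield the same bound and the same rounding nuisance, which you handle cleanly.
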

\begin{proof}
The smoothness (Definition~\ref{def:smooth}) over the domain $\interval a b$ gives us:
\begin{multline*}
	\exists\beta \; \forall c_1,c_2,c_3 \quad a \le c_1 < c_2 \le c_3 \le b \quad \forall s \in \natural \\
		\Prob{  X \in \interval{ c_2 - \frac{c_3-c_1}{s^\alpha} }{c_2} \; \middle| \; X \in \interval{c_1}{c_3}  }
		\lespaces \frac{\beta s^{1-\delta}}{s}  \eqspaces  \beta s^{-\delta}.
\end{multline*}
We choose to cover the whole domain $\interval{c_1}{c_3} := \interval a b$. The conditioning can be removed because it is always fulfilled.
\[
	\exists\beta \; \forall c_2 \quad a < c_2 \le b \quad \forall s \in \natural \quad
		\Prob{  X \in \interval{ c_2 - \frac{b-a}{s^\alpha} }{c_2}  }
		\lespaces \beta s^{-\delta}
\]

Now we consider splitting the domain into $k \ge n^{\alpha / \delta}$ equally long intervals.
Let us choose $c_2$ as the endpoint of an arbitrary
interval $I$ and choose $s:= \lfloor n^{1 / \delta} \rfloor$, so
$ s^\alpha \le n^{\alpha / \delta} \le k $ and thus the above probability covers at least the whole interval $I$. That gives us:
\[	\Prob{X \in I} \lespaces \beta s^{-\delta}
	\eqspaces \beta {\lfloor n^{1 / \delta} \rfloor}^{-\delta}
	\lespaces \beta \left( n^{1 / \delta} - 1 \right) ^{-\delta}.
\] \[
\text{Since } \lim_{n \to \infty}
	\frac{ ( n^{1 / \delta} ) ^{-\delta} }{ ( n^{1 / \delta} - 1 ) ^{-\delta} }
	= 1 \text{, we conclude that } \Prob{X \in I} \text{ is } \OO(1/n).
\]
The input keys are chosen independently, so the number of keys in $I$ is given by the binomial distribution, and its expected value
$n \Prob{X \in I}$ is in $\OO(1)$.
\end{proof}

\begin{rem*}
We showed that the number of keys in an interval is expected to be constant, but the tail of that binomial distribution can be bounded even more tightly, e.\,g.~by Chernoff bounds~\cite[chapter~4.1]{randomAlgs}, to guarantee that high values are exponentially rare.
We do not elaborate on a finer analysis, because we feel a more pressing problem that we do not solve in this paper~-- the unrealistic (unmotivated) character of the used input model where all keys are inserted \emph{independently} from a (partially) known distribution.
We use those assumptions for our analysis to enable comparison with known structures, as we have only found one published structure that uses a different model~\cite{DemaineJP04} and moreover the bounds implied in that case seem relatively weak.
\end{rem*}


\section{Combining with known results and comparing to related work \label{sec:main-results}}

We propose to split the bit representation of every key into two parts,
exactly as one step of the decomposition from van Emde Boas trees,
except that we split asymmetrically. In both parts we propose to use
standard structures that do not utilize distribution properties of
the input. The speed on the more significant bits will be due to the
keys being short, and for the less significant bits the number of
keys will be small in expected case due to smoothness.

\subsection{Reviewing standard vEBT decomposition}

Standard van Emde Boas trees solve the predecessor problem with all
operations needing $\OO(\log l)$ time when working with keys of $l$
bits. That assumes $l$ is not asymptotically larger than the word
length of the used word-RAM.

The well-known decomposition can be viewed as performing binary search
for the longest prefix of bit-representation that is shared by the
searched key and some of the stored keys. The structure stores a mapping
from the more significant halves to corresponding subsets of the less
significant halves of the stored keys. These subsets are stored recursively
in the same way, and also the set of occurring more significant halves
is stored recursively. The operations on vEBT are carefully designed
to perform at most one nontrivial recursive call on each recursion
level, and all the rest are constant-time operations.

To keep the space linear\footnote{In word-RAM data structures, linear space means using $\OO(n)$ words
of memory.}, some modifications to the original structure are required. The mappings
can be represented by hash tables instead of simple arrays, and indirection
can be used: the whole set is split into $\Theta(\log l)$-sized consecutive
clusters in a fashion similar to B-tree nodes, and the actual vEBT construction
is only applied to the whole clusters joined together by a doubly
linked list. These changes make the complexity of insertions and deletions
expected and amortized where the expectation is only over random bits.

A possible way of doing these modifications is described in more details
e.g.~in~\cite[chapter 4]{Cunat10}, including the algorithms for
operations which can also be found in textbooks \cite[chapter 20]{intro2alg}.

\subsection{Modifying the vEBT decomposition}

To cover the cases most studied in previous work, we focus on
$\left(s^{\alpha},s^{1-\delta}\right)$-smooth distributions while using linear space.
To satisfy assumptions of the \ref{lem:smooth-bucket}, our top-level decomposition
needs to split away at least the most significant
$(\alpha/\delta)\log n$ bits.\footnote{In case there are not that many bits in the word, we can ``split
all of them'', meaning we basically use vEB trees directly while
fitting into the stated complexity bounds.} If the values of parameters are unknown, we can split away more bits
to be asympotically certain, e.g. $\log^{2}n$.

These bounds on the number of bits change during the life of the structure,
but we can easily ensure that we cut at least that many by performing
a full rebuild after every $n_{0}/4$ modifying operations where $n_{0}$
denotes the size of the stored set during the last rebuild. It would
also be possible to deamortize that process by standard technique
of global rebuilding \cite{global-rebuilding}. Also note that $\left(\alpha/\delta\right)\log n>\log^{2}n$
only for $n<2^{\alpha/\delta}$ which is an unknown constant; thus
the performance will be constant in that base case.

We can use standard hashed vEBT on the more significant bits to get
$\OO\left(\log\left(\log^{2}n\right)\right)=\OO\left(\log\log n\right)$
expected amortized time per operation. Each substructure for the less
significant bits is expected to store $\OO(1)$ keys. The resulting
performance is essentially the same as that of~Kaporis et~al.~\cite{KMSTTZ06},
except that unlike all previous work we achieve linear space without
requiring $\alpha\le1$, and our approach is conceptually much simpler.
\begin{thm*}
There is a structure for solving the predecessor problem in $\OO\left(\log\log n\right)$
expected time per operation and linear space on any $\left(s^{\alpha},s^{1-\delta}\right)$-smooth
distribution for arbitrary $\alpha,\delta>0$. The distribution and
parameters do not need to be known.
\end{thm*}


\subsection{Concluding remarks}

If we wanted to cut down time on unfavorable input, we could use the
same approach as~Kaporis et~al. There are standard structures for
the predecessor problem working in time $\OO(\!\sqrt{\log n/\log\log n})$
per operation on arbitrary input (considering the less precise bounds),
so these can be used for the less significant bits. Note that when
handling the more significant bits, the expected amortized time $\OO\left(\log\log n\right)$
remains independent of the input, as it only comes from randomization
during hashing.

We feel that it is good to decouple most of the analysis from complex
conditions on the input, as especially in practice we can rarely assume
that all operations on the structure are independent and identically
distributed. For the approach of our paper to work, it is enough (informally)
that the information is only carried by a sufficiently short bit prefix
of the keys.

\section*{Acknowledgements}

Most of the results presented above were included in the author's master thesis~\cite{Cunat10}. The author wishes to dedicate this paper to the memory of his supervisor V\'aclav Koubek, who passed away before this work was completed.

\bibliographystyle{plain} 
\bibliography{refs}

\end{document}